\newtheorem{theorem}{Theorem}
\newtheorem{corollary}{Corollary}[theorem]
\newtheorem{lemma}[theorem]{Lemma}
\newtheorem{remark}{Remark}
\theoremstyle{definition}
\newtheorem{definition}{Definition}
\newtheorem{example}{Example}
\theoremstyle{remark}
\newcommand{\lec}{\underset{\overset{c}{}}{\le}}
\newcommand{\gec}{\underset{\overset{c}{}}{\ge}}
\newcommand{\eqc}{\underset{\overset{c}{}}{=}}
\newcommand{\avec}[1]{\accentset{\rightharpoonup}{#1}}
\begin{document}
\title{Algorithmic Information Forecastability 
}

\author{Glauco Amigo,~\IEEEmembership{Member,~IEEE,}
	Daniel Andr\'{e}s D\'{i}az-Pach\'{o}n,~\IEEEmembership{Member,~IEEE,}\\
	Robert J. Marks,~\IEEEmembership{Life Fellow,~IEEE,}
		Charles Baylis,~\IEEEmembership{Senior Member,~IEEE,}
\thanks{This work was supported in part by Discovery Institute's Walter Bradley Center for Natural and Artificial Intelligence.}
}

\markboth{IEEE Transactions}%
{Amigo \MakeLowercase{\textit{et al.}}: Algorithmic Information Forecastability}


\maketitle

\begin{abstract}
The outcome of all time series cannot be forecast, e.g. the flipping of a fair coin. Others, like the repeated $\{01\}$ sequence $\{ 010101\cdots \}$ can be forecast exactly.   Algorithmic information theory can provide a measure of {\em forecastability} that lies between these extremes. The degree of forecastability is a function of only the data.
For prediction (or classification) of labeled data, we propose three categories for forecastability: oracle forecastability for predictions that are always exact, precise forecastability for errors up to a bound, and probabilistic forecastability for any other predictions. Examples are given in each case.
\end{abstract}

\begin{IEEEkeywords}
Algorithmic information theory, Markov chains, reinforcement learning, machine intelligence, Kolmogorov complexity, prediction modeling
\end{IEEEkeywords}

\section{Introduction}

In 2010, artificial intelligence pioneer Marvin Minsky noted the importance of algorithmic
information theory applied to the field of machine learning.
\begin{quotation} \noindent
	``It seems to me that the most important discovery since G\"{o}del was the discovery by Chaitin, Solomonov and Kolmogorov of a concept called algorithmic probability, which is a fundamental new theory of how to make predictions given a collection of experiences. And, this is a beautiful theory, everybody should learn it\dots \ But it's got one problem, which is that you can’t actually calculate what this theory predicts because it’s too hard, requires an infinite amount of work. However, it should be possible to make practical approximations to the Chaitin-Kolmogorov-Solomonov theory that would make better predictions than anything we have today. And everybody should learn all about that and spend the rest of their lives working on it.''  \\
	\hspace*{\fill} ---Marvin Minsky \cite{minski}
\end{quotation}
Both Minsky's enthusiasm and cautions are spot on. Exact computation of the algorithmic complexity of any object is unknowable. But it can be bounded, for example, using developed procedures used in the field of {\em minimum description length} \cite{grunwald2007minimum}. The algorithmic information theory paradigm provides valuable insight which we now apply to time series forecastability.

Simply stated, prediction machine learning requires future data to be forecast from past data. For instance, machine intelligence requires images used in training in the past to be in some sense in the same class as images to be classified in the future. We define data to be {\em forecast ergodic} \cite{Gilder} if it can be predicted from the labeled training data and the current input. {\em Algorithmic information forcastability} (AIF), or forecast ergodicity, is a measure of the ability to forecast future events from data in the past. We model this bound from the algorithmic complexity of the available data. 

We use the word ergodic as it is generically defined as ``of or relating to a process in which every sequence or sizable sample is equally representative of the whole'' \cite{MerriamWebster}.  In time series, ergodicity concerns the estimation of parameters from a single observation \cite{alaoglu,cornfeld,papoulis}.  
The time series generated by the repeated flipping of a fair coin, 1 for heads and 0 for tails, is {\em mean ergodic} since the mean of $\frac{1}{2}$ can be estimated from a single observation of a time series of flipping a coins. A series is forecast ergodic if, given enough data, the future can be specified or estimated from past data. 
Coin flipping is not forecast ergodic since the exact value of future coin flip cannot be determined by previous coin flips. 

Traditional computational learning theory \cite{angluin,anthony} such as Valiant's probabilistic almost correct (PAC) approach \cite{valiant1984,valiant2013}, is based on probability and models.  We propose the application of Kolmogorov-Chaitin-Solomonov (KCS) complexity\footnote{
	a.k.a. {\em Kolmogorov complexity} \cite{cover,vitanyi}. } 
\cite{marks2017} and algorithmic information theory \cite{tadaki2019} to model forecast ergodicity. Unlike probabilistic models and Shannon information theory, algorithmic information theory is based on data structure. The forecast ergodicity we propose is model-free and determined only by data. 

The motivation behind forecast ergodicity is to show the computational boundaries to prediction of future outputs from available  data and inputs. Predictions can be done after  problem reduction to inputs and outputs in terms of computation (see Figure \ref{fig:forerggraph}). The goal is to extract the structure (the algorithmic information) in the available data  to forecast future outputs  by applying this structure to new inputs. Forecast accuracy is determined by the algorithmic complexity of the available data. 
For forecast ergodic data, if some structure exist that relates inputs with outputs in the training data, then the shortest program that generates the training data necessarily exploits the existing structure for better compression. If the existing structure applies to the future, then the problem is forecast ergodic.

In the next section we introduce terminology required for the paper. The proposed model of forecasting in Section \ref{sec:FEAI} builds on Solomonov's theory of prediction \cite{SOLOMONOFF1,SOLOMONOFF2}, which indicates that ``shortest programs almost always lead to optimal prediction'' \cite{vitanyi}.
Since it is based on  Kolmogorov complexity, forecast ergodicity can only be bounded \cite{RISSANEN1978465,Bhat2010BIC}.

\begin{figure}
	\centering
	\includegraphics[width=1\linewidth]{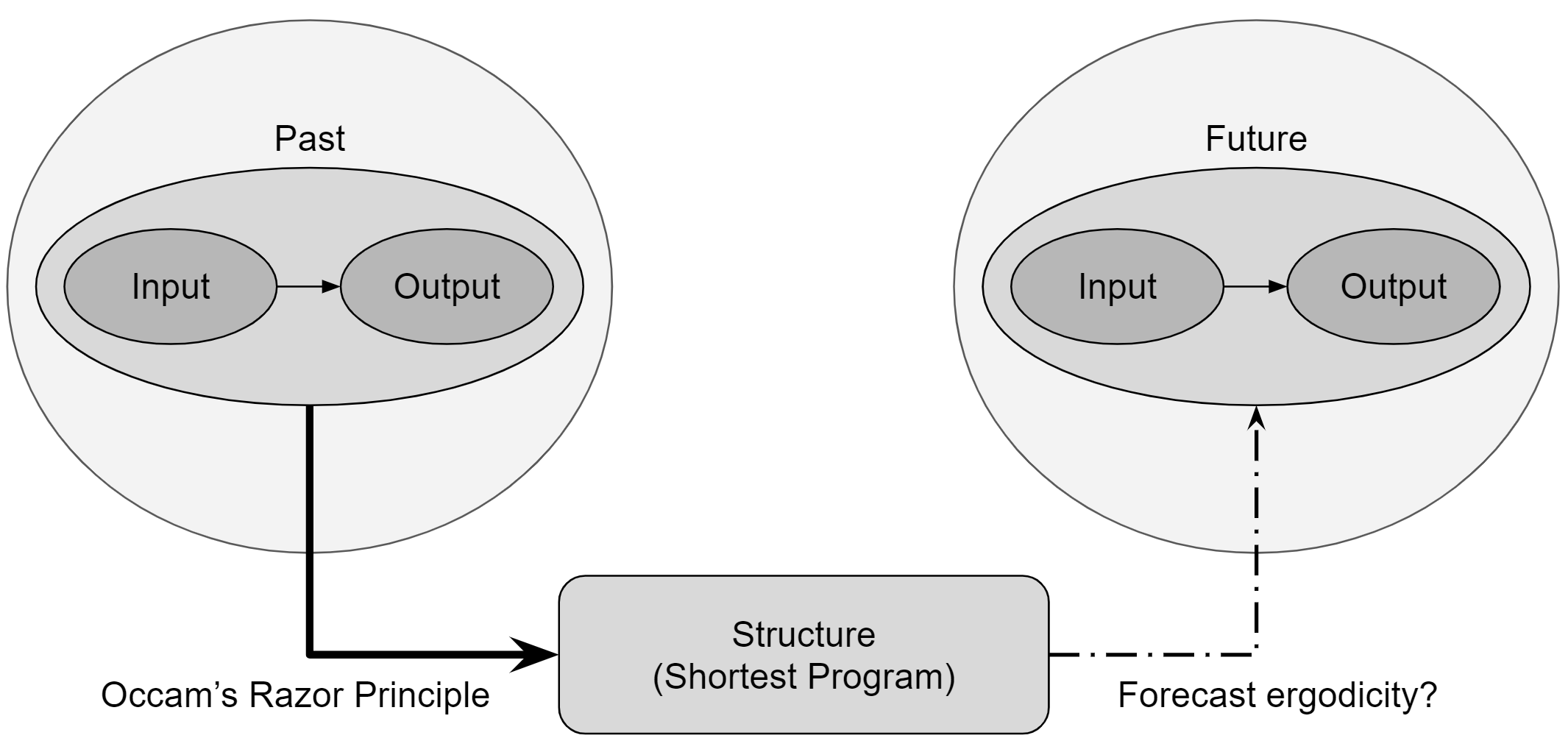}
	\caption{Illustration of problem reduction to inputs and outputs in forecasting. }
	\label{fig:forerggraph}
\end{figure}


\section{Background}
For continuity a short introduction to KCS complexity is appropriate. 

\subsection{KCS Complexity}
KCS complexity considers a computer program $p$, a universal computer $\mathcal{U}$ and the corresponding computer output  $x=\mathcal{U}(p)$.  Let the length of program $p$, in bits, be $\ell(p)$. There are many programs that can generate a given output $x$. The \textit{KCS complexity}, $K(x)$, is the length of the shortest binary program $p$ that generates the object $x$ \cite{cover, vitanyi, ASC, structureFunctions}:
\begin{equation}\label{eq:kolmogorov_complexity}
	K_\mathcal{U}(x) = \min_{p}\left\{\ell(p):\mathcal{U}(p)=x\right\}.
\end{equation}

When additional information $y$ is provided as an input to the universal computer $\mathcal{U}$ along with the program $p$ to generate the binary string $x$, the \textit{conditional KCS complexity} is defined as: 
\begin{equation}\label{eq:conditional_kolmogorov_complexity}
	K_\mathcal{U}(x|y) = \min_{p}\left\{\ell(p):\mathcal{U}(p,y)=x\right\}.
\end{equation}

For every program $p$, for a universal computer $\mathcal{U}$, the program $p$ can be translated to any other universal computer $\mathcal{A}$. The maximum cost of the translation is the constant $c
$ 
equal to the length of the shortest program that translates from $\mathcal{U}$ to $\mathcal{A}$ \cite{cover}. So, for any pair of universal computers $\mathcal{U}$ and $\mathcal{A}$
$$ K_{\mathcal{U}}(x) \le K_{\mathcal{A}}(x) + c. $$
The constant $c$ does not depend on $p$. So it is possible to use any universal computer $\mathcal{U}$ as a fixed reference to measure the algorithmic complexity by adding the constant. 
More generally,
\begin{equation}
	\mid	K_{\mathcal{U}}(x) -K_{\mathcal{A}}(x) \mid \leq c 
	\label{x201229}
\end{equation}
where the constant $c$ is independent of $p$.  	
The bound in \eqref{x201229} can be written more simply as 
$$ K_{\mathcal{U}}(x) \eqc K_{\mathcal{A}}(x). $$
This equality allows us to drop reference to the specific computer  $\mathcal{U}$ in future discussions about KCS complexity. 

\begin{example}\label{ex:simple_example}
	Consider the  program that prints a concatenation $x$ of the binary numbers from 0 to  $n$ (e.g. the binary Champernowne sequence \cite{calude,champernowne} for the first $n +1$ non negative integers):		
	\begin{Verbatim}[breaklines=true, breakanywhere=true]
PROGRAM 1:
print('011011100101110111100010011010101111001101111011111000010001100101001110100...')
	\end{Verbatim}
	Another program that prints the same binary string  $x$ is:
	\begin{Verbatim}[breaklines=true, breakanywhere=true]
PROGRAM 2:
for i in range(n+1):
  print('{0:b}'.format(i), end='')
	\end{Verbatim}
	The length of  program 1 grows as $O\left(n\log n \right)$, while the length of  program 2 grows as $O(\log n)$.\footnote{Throughout this paper the base of the logarithm is 2.}
	So, for large $n$, program 2 is always shorter than program 1. where  $\log n$ relates to the length of the bit stream in the second program. 
\end{example}
From this example, given a concatenation $x$ of binary numbers from 0 to $n$, the KCS complexity of $x$ is bounded by 
\[
K\left(x \mid  \ell(p)\right) \leq \log n +c 
\]

\section{Forecastability}\label{sec:FEAI}
This section contains the core definitions on algorithmic information forecastability and some related analysis illustrated with simple examples. More detailed examples are presented in Section \ref{sec:examples}. 

The data from the past is the training data. Both the past data and the future data include inputs and outputs. In the definitions, the available data  includes the  past data and some inputs of the future data. The predictions are the outputs of the future. 

Let $\mathcal{F}$ be the countable set of all finite binary strings.
Let $\avec{X}_M$ be a $M$ vector such that each of its components belongs to  $\mathcal{F}$,   where  $M \in \mathbb{N} \cup \{\infty \} $.  For $N \le M$, let $\avec{X}_N$ be the first $N$ components of input data  $\avec{X}_M$. Define analogously the output data $\avec{Y}_M$ and $\avec{Y}_N$.
The length of  $x_m \in \avec{X}_M$ can be different than the length of  $y_m \in \avec{Y}_M$.  
Table \ref{tab:outline-nomenclature-1} outlines these elements. 

The pair $\left(\avec{X}_N, \avec{Y}_N\right)$ represents the training data, which can be viewed as  occuring in the past. The goal of any machine intelligence application is to forecast the future outputs $\avec{Y}_M \setminus \avec{Y}_N = \left\{y_{N+1}, y_{N+2}, \dots, y_M\right\}$ given  $\left(\avec{X}_M, \avec{Y}_N\right)$, where for two sets $A$ and $B$, $A \setminus B = A\overline{B} $ is the set difference between $A$ and $B$.

\begin{table}[htbp]
	\centering
	\caption{Outline of training and future.}
	\label{tab:outline-nomenclature-1}
	\setlength\extrarowheight{5pt}
	\begin{tabular}{c|c|c|}
		\cline{2-3}
		\multicolumn{1}{l|}{}              & Past (Training Data) & Future                            \\ \hline
		\multicolumn{1}{|c|}{Input}        & $\avec{X}_N$         & $\avec{X}_M \setminus \avec{X}_N$ \\ \cline{2-3} 
		\multicolumn{1}{|c|}{$\avec{X}_M$} & $x_1, \dots, x_N$    & $x_{N+1}, \dots, x_M$             \\ \hline
		\multicolumn{1}{|c|}{Output}       & $\avec{Y}_N$         & $\avec{Y}_M \setminus \avec{Y}_N$ \\ \cline{2-3} 
		\multicolumn{1}{|c|}{$\avec{Y}_M$} & $y_1, \dots, y_N$    & $y_{N+1}, \dots, y_M$             \\ \hline
	\end{tabular}
\end{table}

\subsection{Oracle Forecastability}
We first look at the case where predictions are always exact. That is, given the training data and the future input, the future output is perfectly predicted. This is achieved by means of the structure of the training data. We call these predictions to be {\em oracle forecastable}, because the shortest program that compresses the training data can be applied to the future input and print the future output with perfect precision, working as a flawless oracle. In the general case, the structure of the training data will compress the data from the past in a way that will include subroutines for the different classes of inputs. Within each class, the relation between input and output is the same, so, the structure compresses that class with a subroutine that implements the relation with an identification of the set of inputs that the relation applies to. 

\begin{definition}\label{def:ergodic-ai} 
	The pair of vectors $\left(\avec{X}_M, \avec{Y}_M\right)$ is \textit{oracle forecastable} (OF) for $N$ if,
	for all $ m \in \{N,\dots,M\}$,
	\begin{equation}\label{eq:ergodic-ai}
		K\left( 
		y_m \middle\vert \avec{X}_N, \avec{Y}_N, x_m 
		\right) \eqc 0.
	\end{equation}
\end{definition}

The equality in  \eqref{eq:ergodic-ai}, as in subsequent definitions, is up to a constant. Here, this constant is the length of the \textit{forecasting program} responsible of doing the prediction.
The forecasting program  pipes the future input to the  subroutine  that implements the structure from the training data and prints the output.

The pair $\left(\avec{X}_M, \avec{Y}_M\right)$ is trivially OF for $N=M$.

According to Definition \ref{def:ergodic-ai}, OF for $N$ requires the conditional KCS complexity of every future ($m>N$) output,  given the future input and the training data, to be 0. Therefore, under OF perfect accuracy is achieved.

\begin{remark}
	OF does not suggest how to obtain perfect accuracy.
\end{remark}

Here are two elaborations.
\begin{enumerate} 
	\item For $N' \in \{N+1, \dots, M\}$, let $\avec{X}_{N'} = \left(\avec{X}_N, x_{N+1}, \cdots, x_{N'} \right)$. A  particular case of an OF for $N$ sequence is
	\begin{equation}\label{eq:ergodic-ai-w}
		K\left( y_{N'} \middle\vert \avec{Y}_N, \avec{X}_{N'} \right) \eqc 0.
	\end{equation}
	Thus, the future input in \eqref{eq:ergodic-ai} is made only of $ x_m$, whereas, in \eqref{eq:ergodic-ai-w}
	the future input includes $x_{N+1}, x_{N+2},\ldots, x_ {N'}$.
	To present this particular case more clearly, we can go from \eqref{eq:ergodic-ai-w} to \eqref{eq:ergodic-ai}:
	\begin{align*}
		K\left( y_{N'} \middle\vert \avec{Y}_N, \avec{X}_{N'} \right) &= 
		K\left( y_{N'} \middle\vert \avec{Y}_N, \avec{X}_N, x_{N+1}, \dots, x_{N'}  \right)\\
		&= K\left( y_{N'} \middle\vert \avec{X}_N, \avec{Y}_N, x_{N'}^*  \right),
	\end{align*}
	where $ x_{N'}^* =\left(x_{N+1}, \dots, x_{N'}\right)$. That is, the vector of  inputs between $N+1$ and $N'$ can be  considered as the future input in \eqref{eq:ergodic-ai}. This addresses the fact that  algorithms that change over time are included in Definition \ref{def:ergodic-ai}.  Algorithmic information forecastability thus applies to both  static algorithms and algorithms that change over time. 
	
	\item
	By Definition \ref{def:ergodic-ai},
	\begin{equation}\label{eq:ergodic-ai-inf}
		\text{OF} \text{ for } N \Rightarrow  K\left(\avec{Y}_M \middle\vert \avec{Y}_N, \avec{X}_M\right) \eqc 0.
	\end{equation}
	The converse is true \textit{for some $\mathcal{M} \in \{N, \dots, M-1\}$}:
	\begin{equation}\label{eq:ergodic-ai-inf-rev}
		\text{OF for some } \mathcal{M}  \Leftarrow K\left(\avec{Y}_M \middle\vert \avec{Y}_N, \avec{X}_M\right) \eqc 0.
	\end{equation}
	However, depending on the pair of vectors $\left(\avec{X}_M, \avec{Y}_M\right)$ being finite or infinite the significance of the implication in \eqref{eq:ergodic-ai-inf-rev} is different.
	\begin{lemma}
		For  $\left(\avec{X}_M, \avec{Y}_M\right)$  \textbf{finite} and $N<M$, if $K\left(\avec{Y}_M \middle\vert \avec{Y}_N, \avec{X}_M\right) \eqc 0$, then $\left(\avec{X}_M, \avec{Y}_M\right)$ is OFE at least for $N= M-1$.
	\end{lemma}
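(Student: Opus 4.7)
My plan is to directly unpack what OF for $N = M-1$ requires and reduce it to the hypothesis through two basic properties of conditional KCS complexity: (a) conditioning on additional information cannot increase complexity by more than a constant, and (b) extracting a single component from a finite vector is a constant-length operation.

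First I would note that by Definition \ref{def:ergodic-ai}, OF for $N = M-1$ means that for every $m \in \{M-1, M\}$,
\begin{equation*}
K\!\left(y_m \,\middle\vert\, \avec{X}_{M-1}, \avec{Y}_{M-1}, x_m\right) \eqc 0.
\end{equation*}
The case $m = M-1$ is trivial: $y_{M-1}$ is already a component of $\avec{Y}_{M-1}$, so a constant-length program reads it off. This case contributes only to the implicit constant. The entire content of the lemma is therefore the case $m = M$, which, after observing that $\avec{X}_{M-1}$ together with $x_M$ reconstructs $\avec{X}_M$, reduces to proving
\begin{equation*}
K\!\left(y_M \,\middle\vert\, \avec{Y}_{M-1}, \avec{X}_M\right) \eqc 0.
\end{equation*}

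Next I would chain two inequalities from the hypothesis $K(\avec{Y}_M \mid \avec{Y}_N, \avec{X}_M) \eqc 0$. Since $N \le M-1$, the string $\avec{Y}_N$ is a prefix of $\avec{Y}_{M-1}$, so monotonicity of conditional complexity under additional side information yields
\begin{equation*}
K\!\left(\avec{Y}_M \,\middle\vert\, \avec{Y}_{M-1}, \avec{X}_M\right) \lec K\!\left(\avec{Y}_M \,\middle\vert\, \avec{Y}_N, \avec{X}_M\right) \eqc 0.
\end{equation*}
Given $\avec{Y}_M$ as output of that program, a fixed parser extracts the last component $y_M$; composing the two programs (the reconstructor of $\avec{Y}_M$ and the extractor of $y_M$) is a constant-length operation, giving $K(y_M \mid \avec{Y}_{M-1}, \avec{X}_M) \eqc 0$, which is exactly what we need.

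The main obstacle, though modest, is conceptual rather than technical: one must use the finiteness of $(\avec{X}_M, \avec{Y}_M)$ to ensure that ``the last index $M-1$'' is well-defined, since for $M = \infty$ there is no largest $\mathcal{M} < M$ and the same reduction fails. The extractor step also needs $M$ itself to be describable in constant length relative to $\avec{Y}_{M-1}$, which holds because the length of $\avec{Y}_{M-1}$ is provided implicitly by conditioning on the string itself. Once these points are noted, the proof is a direct composition of the hypothesis with monotonicity and component extraction, and the constant absorbed into $\eqc$ is the total length of the resulting fixed forecasting program.
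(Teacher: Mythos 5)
Your proof is correct and follows essentially the same route as the paper's: the key observation in both is that for $N=M-1$ the only future input information in $\avec{X}_M$ beyond the training inputs is the single element $x_M$, which is exactly the $x_m$ permitted by Definition~\ref{def:ergodic-ai}. You additionally make explicit two steps the paper leaves implicit --- the monotonicity upgrade from conditioning on $\avec{Y}_N$ to conditioning on $\avec{Y}_{M-1}$, and the constant-cost extraction of $y_M$ from $\avec{Y}_M$ --- which is a harmless and welcome elaboration at the paper's level of rigor.
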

	\begin{proof}
		The vector $\avec{X}_M$ might contain  information  which is not  in  $\avec{X}_{N}$ when $N<M$. But if $N = M-1$, then $\avec{X}_M \setminus \avec{X}_N = x_{N+1} = x_M$, which corresponds to the input $x_m$ in Definition \ref{def:ergodic-ai}. In this case the information only comes from  one element that satisfies the definition of OF.
	\end{proof}
	\begin{example}
		Electric load forecasting consist of predicting the demand for electricity
		based on historical data and other inputs such as weather forecast.
		Electric load forecasting  is not OF for $N<M$ because the output cannot be forecast exactly \cite{electric_load_forecast}.
		If data is collected on a daily basis, the forecasting inputs can include at most the electric load of the previous day. Therefore, were the input of tomorrow available, the forecasting of today would be trivial.
		Alas, the input of tomorrow is not available today.
		Additionally, this example illustrates that $\left(\avec{X}_M, \avec{Y}_M\right)$ is not OF for any $N<M$   when the output is random, according to Definition \ref{def:ergodic-ai}.
	\end{example}

	The next lemma says that if the conditional Kolmogorov complexity of the output given the input and the training data goes to zero as $M \rightarrow \infty$, then the limit is actually hit for some finite $\mathcal M$. The implication is that under this condition, the pair $\left(\avec{X}_M, \avec{Y}_M\right)$ is OF for all $m > \mathcal{M}$.
	\begin{lemma}\label{lem:limitHit}
		If 		
		\begin{equation}\label{eq:ergodicity-proof}
			\lim_{M \rightarrow \infty} K\left( \avec{Y}_{M} \middle\vert \avec{Y}_{N}, \avec{X}_{M} \right) \eqc 0,
		\end{equation}
		then there exists $\mathcal{M}$ such that for all $ m > \mathcal{M}$, 
		$$ K \left( y_m \middle\vert \avec{Y}_N, \avec{X}_m \right) \eqc 0. $$
	\end{lemma}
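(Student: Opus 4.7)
The plan is to convert the asymptotic hypothesis into a uniform constant bound, and then extract a single coordinate from the full vector.

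First I would unpack the hypothesis. Since each $K\bigl(\vec{Y}_M \mid \vec{Y}_N, \vec{X}_M\bigr)$ is a non-negative integer and the statement $\lim_{M\to\infty} K(\cdots) \eqc 0$ means the limit is bounded by some fixed constant $c_1$ independent of $M$, there must exist a threshold $\mathcal{M}$ such that for every $M > \mathcal{M}$ there is a program $p_M$ with $\ell(p_M) \le c_1$ satisfying $\mathcal{U}\bigl(p_M,\,\vec{Y}_N,\vec{X}_M\bigr) = \vec{Y}_M$. The constant $c_1$ is the same for every $M > \mathcal{M}$; this is the reformulation I need.

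Next I would construct a single forecasting program that produces $y_m$ from $(\vec{Y}_N, \vec{X}_m)$ for any $m > \mathcal{M}$. Given the inputs $(\vec{Y}_N, \vec{X}_m)$, the program simulates $\mathcal{U}$ on $p_m$ with side information $(\vec{Y}_N, \vec{X}_m)$ to obtain $\vec{Y}_m = (y_1,\ldots,y_m)$, and then returns the last component $y_m$. The extraction step requires only a prefix-free decoder for the vector output, which is a fixed piece of code of length $c_2$ independent of $m$. Therefore
\begin{equation*}
K\bigl(y_m \mid \vec{Y}_N, \vec{X}_m\bigr) \le \ell(p_m) + c_2 \le c_1 + c_2,
\end{equation*}
which is a constant independent of $m$, giving $K(y_m \mid \vec{Y}_N, \vec{X}_m) \eqc 0$ as required.

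The step I expect to be the main obstacle is the first one: making precise what $\lim K(\cdot) \eqc 0$ is supposed to mean and justifying the passage to a uniform bound $c_1$ that works for all $M > \mathcal{M}$. Once that reformulation is in hand, the rest is the standard observation that a short program for the whole sequence plus a constant-size projection yields a short program for each late coordinate; no new ideas are needed beyond noting that the projection overhead does not grow with $m$.
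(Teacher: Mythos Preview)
Your argument is correct. The paper's own proof is a one-line contradiction: it assumes that for every threshold $\mathcal{M}$ some $m>\mathcal{M}$ has $K(y_m\mid\avec Y_N,\avec X_m)$ not $\eqc 0$, invokes the fact that Kolmogorov complexity is a non-negative integer, and concludes that the limit in the hypothesis would then be $\gec 1$. The underlying mechanism is the same as yours---a constant-size projection relates $K(\avec Y_m\mid\avec Y_N,\avec X_m)$ to $K(y_m\mid\avec Y_N,\avec X_m)$---but the paper leaves that step entirely implicit, whereas you spell it out. Your direct construction is therefore more informative; the paper's contraposition is merely terser. One wording nit: you speak of ``a single forecasting program,'' but the short program $p_m$ you feed to $\mathcal{U}$ may vary with $m$; what matters, and what you do correctly establish, is that each such program has length bounded by $c_1+c_2$ independently of $m$.
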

	\begin{proof}
		Assume that for all $\mathcal{M}$ there is an $m>\mathcal M$ such that $ K \left( y_m \middle\vert \avec{Y}_N, \avec{X}_m \right) $ is not zero. Since Kolmogorov complexity is always a non-negative integer, this implies that $\lim_{M \rightarrow \infty} K\left( \avec{Y}_{M} \middle\vert \avec{Y}_{N}, \avec{X}_{M} \right) \gec 1$, which contradicts~\eqref{eq:ergodicity-proof}.
	\end{proof}
	\begin{corollary}\label{cor:limitHit}
		Choose $\mathcal{M}$ as introduced in Lemma \ref{lem:limitHit}  and substitute $\avec{Y}_N$  by $\avec{Y}_\mathcal{M} = \{\avec{Y}_N, y_{N+1},\dots,y_\mathcal{M}\}$, then there exists $\mathcal{M}$ such that for all $m > \mathcal{M}$,
		$$ 	K \left( y_m \middle\vert \avec{Y}_\mathcal{M}, \avec{X}_m \right) \eqc 0 $$
		which is equivalent to the particular case  of OF for $\mathcal{M}$  in  \eqref{eq:ergodic-ai-w}.
	\end{corollary}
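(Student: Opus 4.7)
The plan is to derive the corollary essentially for free from Lemma \ref{lem:limitHit} by exploiting the fact that enlarging the side information in a conditional Kolmogorov complexity cannot increase it beyond an additive constant. First I would invoke Lemma \ref{lem:limitHit} to fix an $\mathcal{M}$ with the property that, for every $m>\mathcal{M}$,
\[
K\!\left(y_m \,\middle\vert\, \avec{Y}_N, \avec{X}_m\right) \eqc 0.
\]
The goal is then to replace $\avec{Y}_N$ by $\avec{Y}_\mathcal{M}=\{\avec{Y}_N,y_{N+1},\dots,y_\mathcal{M}\}$ without losing this vanishing bound, and to observe that the resulting identity is exactly the specialization of OF recorded in \eqref{eq:ergodic-ai-w}.

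The main step is the standard monotonicity of conditional Kolmogorov complexity. Because $\avec{Y}_\mathcal{M}$ contains $\avec{Y}_N$ as its first $N$ components, and $N$ is a fixed integer once $\mathcal{M}$ has been chosen, a universal machine can recover $\avec{Y}_N$ from $\avec{Y}_\mathcal{M}$ by a parser whose description length is a fixed constant $c'$ independent of $m$. Prepending this parser to any shortest program witnessing $K(y_m\mid\avec{Y}_N,\avec{X}_m)$ produces a program that outputs $y_m$ from the pair $(\avec{Y}_\mathcal{M},\avec{X}_m)$, yielding
\[
K\!\left(y_m \,\middle\vert\, \avec{Y}_\mathcal{M}, \avec{X}_m\right) \le K\!\left(y_m \,\middle\vert\, \avec{Y}_N, \avec{X}_m\right) + c' \eqc 0.
\]
Since Kolmogorov complexity is non-negative, the left-hand side is therefore $\eqc 0$, which is the identity claimed by the corollary.

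Finally, I would note that the resulting expression has exactly the form of \eqref{eq:ergodic-ai-w} with the roles of $N$ and $\avec{Y}_N$ played by $\mathcal{M}$ and $\avec{Y}_\mathcal{M}$, so the ``equivalent to'' statement at the end of the corollary is immediate. I do not anticipate a genuine obstacle; the only item that needs a brief check is that the parser is truly a fixed-size routine. Because $N$ and $\mathcal{M}$ are fixed numerical constants once Lemma \ref{lem:limitHit} is applied, their values can be hard-coded into the parser and absorbed into the $\eqc$ constant, so the bound is uniform in $m$.
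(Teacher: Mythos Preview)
Your proposal is correct and matches the paper's intent: the paper states this corollary without a separate proof, treating it as immediate from Lemma~\ref{lem:limitHit}, and your monotonicity-of-side-information argument is exactly the standard justification one would supply. Your remark that $N$ and $\mathcal{M}$ are fixed once Lemma~\ref{lem:limitHit} is invoked, so the parser's cost is absorbed into the $\eqc$ constant uniformly in $m$, is the one detail worth making explicit, and you handle it correctly.
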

\end{enumerate}

The next two lemmas address some applications to Markov chains.
\begin{lemma}\label{lem:absMarkov}
	Let $\left\{\avec{X}_M\right\}$  be an absorbing Markov chain with state space $\Omega$. Let 
	\begin{align*}
		f: \Omega  &\to \mathcal{F} \\
		x_m &\mapsto y_m.
	\end{align*}
	Then the pair $\left(\avec{X}_M, \avec{Y}_M\right)$  is OF for some finite $N$.
\end{lemma}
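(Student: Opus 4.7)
The plan is to exploit the defining property of an absorbing Markov chain: with probability one the chain hits an absorbing state in finite time, and once absorbed it stays at that state forever. Since the output at step $m$ is the deterministic image $y_m = f(x_m)$, constancy of $\{x_m\}$ past the absorption time immediately forces constancy of $\{y_m\}$, and this is exactly what is needed for OF to hold with a trivial predictor.

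First, I would let $T$ denote the (random, almost-surely finite) first hitting time of the absorbing set $A \subseteq \Omega$ for a given realization $\avec{X}_M$. For any $N \ge T$ the path satisfies $x_m = x_N$ for all $m \ge N$. I would then fix any such $N$ and consider the pair $(\avec{X}_N, \avec{Y}_N)$ as training data; since an absorbing chain satisfies $\mathbb{P}(T < \infty)=1$, such a finite $N$ exists almost surely.

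Second, I would verify the condition of Definition~\ref{def:ergodic-ai} for this $N$. For $m = N$ the value $y_N$ is literally the last component of $\avec{Y}_N$, so the conditional complexity is $\eqc 0$ via a constant-length universal program that extracts the final entry of the training output vector. For $m > N$, we have $y_m = f(x_m) = f(x_N) = y_N$ by the absorbing property, so the \emph{same} constant-length extraction program, which ignores the future input $x_m$, still outputs $y_m$ correctly. Hence $K(y_m \mid \avec{X}_N, \avec{Y}_N, x_m) \eqc 0$ for every $m \in \{N, \dots, M\}$, which is exactly OF for $N$.

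The main obstacle is not algorithmic but probabilistic in flavour: the threshold $N$ depends on the realization, since absorption is only almost sure rather than deterministic. I would therefore read the conclusion of the lemma as ``$(\avec{X}_M, \avec{Y}_M)$ is OF for some finite $N$ with probability one'', matching the pathwise nature of the OF definition. A subtle secondary point worth flagging is that we do \emph{not} need $f$ to be computable, nor do we need $\Omega$ itself to be encoded: the predictor simply copies the last entry of $\avec{Y}_N$ and never evaluates $f$ on unseen states, so no information about the transient portion of the chain enters the program.
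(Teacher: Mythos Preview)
Your argument is correct and follows essentially the same route as the paper: take $N$ to be the (almost-surely finite) absorption time, observe that $x_m$ and hence $y_m=f(x_m)$ are constant for $m\ge N$, and conclude OF via a trivial predictor. Your treatment is in fact more careful than the paper's---you explicitly exhibit the constant-length ``copy the last training output'' program, and you flag the almost-sure qualifier and the irrelevance of computability of $f$, points the paper's short proof leaves implicit.
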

\begin{proof}
	In an absorbing Markov chain, after being in any of the states of the chain, an absorbing state is reached in a finite number of steps. As $M \rightarrow \infty$, an absorbing state will be reached at some time almost surely. Let $a$ be the first time an absorbing state is reached. Then for all $m \ge a$, $x_m = x_a$ and $ y_m = y_a$, and the result follows. 
\end{proof}

\begin{lemma}
	Let $\left\{\avec{X}_M\right\}$  be an irreducible finite Markov chain with state space $\Omega$. Let 
	\begin{align*}
		f: \Omega  &\to \mathcal{F} \\
		x_m &\mapsto y_m.
	\end{align*}
	Then the pair $\left(\avec{X}_M, \avec{Y}_M\right)$  is OF for some finite $N$.
\end{lemma}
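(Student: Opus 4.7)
My plan is to mirror the structure of Lemma \ref{lem:absMarkov} for absorbing chains, replacing the absorption argument with the recurrence properties of an irreducible finite Markov chain. First I would invoke the standard fact that an irreducible Markov chain on a finite state space $\Omega$ is positive recurrent, so that every state is visited infinitely often almost surely. In particular, the first random time
\[
N \;=\; \min\bigl\{\,n\ge 1 : \{x_1,x_2,\dots,x_n\} = \Omega\,\bigr\}
\]
at which every state of $\Omega$ has been visited is finite almost surely.

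Next, because $y_m=f(x_m)$ is a deterministic function of the current state, for each $s\in\Omega$ the training data $(\avec{X}_N,\avec{Y}_N)$ contains at least one index $i\le N$ with $x_i=s$ and $y_i=f(s)$. In other words, the training data encodes the full graph of $f$ on $\Omega$. Thus for any future step $m>N$ a short fixed-length forecasting program, given the inputs $(\avec{X}_N,\avec{Y}_N,x_m)$, can scan $\avec{X}_N$ for an index $i$ with $x_i=x_m$ (which must exist, since $x_m\in\Omega$) and print $y_i = f(x_i) = f(x_m) = y_m$. This description does not depend on $m$ or on the particular chain realization beyond time $N$, so its length is a universal constant.

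Combining the two steps yields
\[
K\!\left(y_m \,\middle\vert\, \avec{X}_N,\avec{Y}_N,x_m\right) \eqc 0
\quad\text{for all } m\in\{N,\dots,M\},
\]
which is exactly Definition \ref{def:ergodic-ai}, so $(\avec{X}_M,\avec{Y}_M)$ is OF for this finite $N$.

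The main obstacle is conceptual rather than technical: the OF property must be interpreted in the almost-sure sense, since $N$ is a random hitting time rather than a deterministic bound, exactly as in Lemma \ref{lem:absMarkov}. A minor point worth stating explicitly is that finiteness of $\Omega$ is essential, because only then can the cover time of the chain be guaranteed finite almost surely; for an irreducible chain on a countably infinite state space the same covering argument would fail.
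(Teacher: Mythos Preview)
Your proof is correct and follows essentially the same approach as the paper's: both invoke recurrence of an irreducible finite Markov chain to guarantee that every state of $\Omega$ appears in the training segment, whence the full graph of $f$ is recoverable and each future $y_m$ can be produced by a fixed-length lookup program. Your version is simply more explicit---defining the cover time $N$, spelling out the lookup routine, and flagging the almost-sure interpretation and the role of finiteness---whereas the paper compresses all of this into ``choosing $N$ large enough \dots\ the result follows.''
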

\begin{proof}
	An irreducible finite Markov chain is also recurrent, that is, every state will appear in $\left\{\avec{X}_M\right\}$ infinitely often. Choosing $N$ large enough, the training data will include every possible input and output of $f$ arbitrarily many times, and the result follows.
\end{proof}

\subsection{Precise Forcastability}
Precision is defined for each application. For example, by an absolute error distance, by a percentage, by a Hamming distance, or by design constraints. Notice that, in general,  an output is precise if it is in a predefined set of possible outcomes. To formally encompass every possible precision constraint,  the following formulation is used.

Let $B(y_m, \epsilon_m) = \left\{y^*_m : d(y_m,y^*_m) < \epsilon_m \right\} $ be a ball with center $y_m$ and radius $\epsilon_m$ under certain metric $d$, defining the precision.

\begin{definition} \label{def:precise-ergodic}
	The pair $\left(\avec{X}_M, \avec{Y}_M\right)$ is \textit{precise forecastable} (PF)  up to $\epsilon$ for $N$, if, for all  $m \in \{ N, \dots, M\}$, there exists $ y^*_m \in B(y_m, \epsilon_m)$ such that 
	\begin{equation}\label{eq:precise-ergodic}
		K\left( y^*_m \middle\vert \avec{X}_N, \avec{Y}_N, x_m \right) \eqc 0 ,
	\end{equation}
	where $\epsilon = [\epsilon_1,\ldots,\epsilon_M]$ is the level of precision.
\end{definition}

The meaning underlying this definition is parallel to the previous case of OF. We are using the structure that is already present in the training data and in the current input to determine the desired output. And this is achieved by means of the algorithmic information of the available data, so that the Kolmogorov complexity of the output, given the available data, is zero. The only thing that changes in the definition of PF is that we introduce a tolerance for the precision of the forecast.

We have formalized the precision in the broadest possible way for PF to be applicable in the general case. More particularly, the following lemma elaborates on one of the most common and intuitive conceptions of precision, that is, when the error is bounded by the absolute value distance. In this case, there is a relation between the magnitude of the absolute value error and the KCS complexity of the error but this relation is not one of equivalence.
\begin{lemma}\label{lem:ball}
	For the ball 
	\begin{equation}\label{eq:pre_epsilon}
		B(y_m, \epsilon_m) = \left\{  y^*_m : \left| y_m -  y^*_m \right| < \epsilon_m \right\},
	\end{equation}
	PF implies
	\begin{equation}\label{eq:ce}
		K\left( y_m \middle\vert \avec{X}_N, \avec{Y}_N, x_m \right) \le C_\epsilon ,
	\end{equation}
	for all $m \in \{ N, \dots, M\}$,
	where $C_\epsilon$ is a constant that bounds the KCS complexity of the error.
\end{lemma}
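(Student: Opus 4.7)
The plan is to decompose $y_m$ into the approximant supplied by PF and a bounded residual, then bound each piece by the subadditivity of conditional KCS complexity. Concretely, by Definition \ref{def:precise-ergodic} there exists $y^*_m \in B(y_m,\epsilon_m)$ with $K\!\left(y^*_m \mid \vec{X}_N,\vec{Y}_N,x_m\right)\eqc 0$. Write the residual $e_m := y_m - y^*_m$, which by hypothesis \eqref{eq:pre_epsilon} satisfies $|e_m|<\epsilon_m$. Since $y_m$ is recoverable from $(y^*_m,e_m)$ by a constant-length addition routine, the standard subadditivity inequality
\[
K\!\left(y_m \mid \vec{X}_N,\vec{Y}_N,x_m\right)\;\lec\; K\!\left(y^*_m \mid \vec{X}_N,\vec{Y}_N,x_m\right) + K\!\left(e_m \mid \vec{X}_N,\vec{Y}_N,x_m,y^*_m\right)
\]
applies, where the $\lec$ absorbs the length of the addition program.

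Next, I would bound the residual term. Because the alphabet $\mathcal{F}$ consists of finite binary strings, any candidate output in $B(y_m,\epsilon_m)$ takes one of at most $\lceil 2\epsilon_m\rceil$ discrete values once an ambient bit‑depth is fixed, so the particular error $e_m$ can be specified by a self‑delimiting code of length at most $\log_2(2\epsilon_m)+O(1)$. Hence $K\!\left(e_m \mid \vec{X}_N,\vec{Y}_N,x_m,y^*_m\right)\le \log_2(2\epsilon_m)+O(1)$. Combining this with the PF bound on $y^*_m$, I take
\[
C_\epsilon \;:=\; \max_{N\le m\le M}\bigl(\log_2(2\epsilon_m)+c'\bigr),
\]
where $c'$ absorbs the forecasting‑program length from PF and the constant overhead of the addition/decoding routine. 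This yields the desired uniform bound \eqref{eq:ce} for every $m\in\{N,\dots,M\}$.

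The genuinely delicate point, and the one I would be careful to state as an implicit assumption, is the discretization of the residual: if the entries of $\vec{Y}_M$ were allowed to be real numbers of unbounded precision, $|e_m|<\epsilon_m$ would not by itself constrain $K(e_m\mid\cdot)$, since an incompressible real could sit inside the ball. The lemma is therefore most naturally read under the convention — consistent with $y_m\in\mathcal{F}$ being a finite binary string interpreted numerically — that $y^*_m$ and $y_m$ share a fixed (finite) numerical precision, so that the set of admissible errors is finite and its cardinality controls the description length. Everything else (subadditivity, the triangle‑style decomposition, pushing the constants into $C_\epsilon$) is routine; the only real step of substance is the counting argument that turns the metric radius $\epsilon_m$ into the logarithmic bit‑budget for $e_m$.
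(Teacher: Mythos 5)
Your proof is correct and follows essentially the same route as the paper's: invoke PF to obtain $y_m^*$ at conditional complexity $\eqc 0$, reduce the problem to $K\left( y_m \middle\vert y_m^* \right)$, and bound the description of the bounded error $y_m - y_m^*$ by a quantity logarithmic in $\epsilon_m$ (the paper writes this as $\lg^*(\epsilon_m)$ where you write $\log_2(2\epsilon_m)+O(1)$; your explicit appeal to subadditivity is just a more careful rendering of the paper's ``equality up to a constant''). Your closing caveat about discretization --- that the counting argument requires $y_m$ and $y_m^*$ to share a fixed finite precision so that the set of admissible errors is finite --- identifies a hypothesis the paper leaves implicit, and is worth retaining.
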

\begin{proof}
	We have
	\begin{equation}\label{asdf}
		K\left( y_m \middle\vert \avec{X}_N, \avec{Y}_N, x_m \right) \eqc K\left( y_m \middle\vert y_m^* \right) \lec C_\epsilon ,
	\end{equation}
	where equality is up to a constant that accounts for the difference between the length of the forecasting program in  $K\left( y_m \middle\vert \avec{X}_N, \avec{Y}_N, x_m \right)$ and the length of the program that performs an arithmetic operation in $K\left( y_m \middle\vert y_m^* \right)$. The latter also accounts for a constant in the inequality. Let
	\begin{equation}\label{eq:cer}
		C_\epsilon =  \max_{m \in [N,\dots,M]} K\left( y_m \middle\vert y_m^* \right) \lec  \lg^*(\epsilon_m)
	\end{equation}
	where $\lg^*(\epsilon_m)$ is the length of a self-delimiting encoding of the $\epsilon_m$ in bits (see Appendix \ref{apx:log_star}). The inequality is up to a constant that includes the sign or direction of $y_m - y_m^*$.
\end{proof}
The previous lemma shows that, for the ball in \eqref{eq:pre_epsilon}, if the error is bounded, then the KCS complexity of the error is bounded as well. 
However, the converse is not true.
To explain it, there are instances of low KCS complexity, such as the instruction in pseudocode ``\texttt{flip\_all\_bits}", that can be short enough to satisfy  \eqref{eq:ce} while not fitting the definition of PFE. 
In general, a bound for the KCS complexity of the error does not bound the error itself.

In a different example, the precision can consist of a Hamming distance bound, where the Hamming distance is defined as $$H\left(y,y^*\right) = \sum_{i=1}^{\ell} \left[y_i \ne y_i^*\right]$$ 
for two binary strings $y = (y_1,\dots,y_\ell), y^* = (y_1^*,\dots,y_\ell^*)$. Consider, for instance, images with black and white pixels, for which black pixels take the binary value of 0 and white pixels take value 1. Images are  visualized in 2 dimensions, but they can be converted to one dimensional arrays for the theoretical analysis. Then
\begin{equation}\label{eq:Hamming}
	B(y_m, \epsilon_m)  = \left\{  y^*_m : \text{H}\left( y_m ,  y^*_m \right) \le h \right\} 
\end{equation}
where  $h$ is the bound for the precision.

In another example, the distance $d$ can also be a Boolean attribute. For instance, assessing if a solution satisfies a series of engineering requirements, with margin to favor selected conflicting goals.

The previous examples illustrate the general formulation of PF and its applicability to different metrics. Note the relation between OF and PF:

\begin{remark}\label{rem:OFE_PFE}
	OF for $N$, in \eqref{eq:ergodic-ai}, is a particular case of PF for $N$, in \eqref{eq:precise-ergodic}, where the precision is maximally constrained with $B(y_m, \epsilon_m)  = \left\{ y_m  \right\} $ for all $m$.
\end{remark}

So, it follows that  OF is a subset of PF, as every  pair $\left(\avec{X}_M, \avec{Y}_M\right)$ that is OF for $N$ is also PF for $N$.

\subsection{Probabilistic Forecastability}
Probabilistic forecastability applies when a subset of elements in a pair $\left(\avec{X}_M, \avec{Y}_M\right)$ is PF. In plain language this means that there is certain probability that the predictions will be precise, which is formalized in the following definition.
\begin{definition} \label{def:probabilistic-ergodic}
	The pair $\left(\avec{X}_M, \avec{Y}_M\right)$ is \textit{probabilistic forecastable} (PrF) for $N$, if for a fraction $P$  of the elements $\left( x_m, y_m \right) $, $m \in \{ N, \dots, M\}$,  there exists $ y^*_m \in B(y_m, \epsilon_m)$ such that 
	\begin{equation}\label{eq:probabilistic-ergodic}
		K\left( y^*_m \middle\vert \avec{X}_N, \avec{Y}_N, x_m \right) \eqc 0.
	\end{equation}
\end{definition}

This definition follows the same logic as in OF and PF. That is, given the structure of the available data, the prediction does not need any additional algorithmic information (this is why the expression equals zero in \eqref{eq:probabilistic-ergodic}). Everything that is needed is a program that identifies the class of the current input, locates the subroutine for that class in the structure, and obtains the corresponding output (the length of this program is included in the constant $c$ under the equal sign in \eqref{eq:probabilistic-ergodic}). What differentiates PrF for $N$ from PF for $N$ is that the probability of achieving precision is not necessarily 1.

The pair  $\left(\avec{X}_M, \avec{Y}_M\right)$ is trivially PrF for $P=0$.

PrF is broad enough to encompass any amount of algorithmic information from the available data with capacity to forecast some future data. The probability $P$ accounts for the ratio of the  future elements $\avec{Y}_M \setminus \avec{Y}_N$ that are forecastable to a given degree of precision. 

Depending on the problem, only one bit of information or larger amounts of information can be necessary. To visualize it with a simple example, a bride and bridegroom may only need one bit of information to know if it is going to rain or not on the day of their wedding to decide if they celebrate it indoors or outdoors. Instead, a farmer needs  more information about the amount of rain. One bit can still be significant for him, but he would need at least 3 categories: (a) not enough rain, (b) just the right amount of rain and (c) too much rain; that is more than one bit of information. The farmer needs  more details about the rain than the couple. And the farmer  needs that information for each day of the year with high probability that the precipitation will fit the forecast. However, in the case of the wedding, they only need one bit of information, but they would need it more accurately if possible. 

Regarding the amount of forecastable elements that non-trivial PrF requires:
\begin{itemize}
	\item When $\left( \avec{X}_M, \avec{Y}_M \right)$ is  finite, PrF for $N$ only requires some  $y_m$, $m \in \{ N, \dots, M\}$,  to be forecastable at a particular precision. Even if there is only one forecastable pair $\left( x_m, y_m \right) $, that  is sufficient to make $P>0$
	\item When $\left( \avec{X}_M, \avec{Y}_M \right)$ is  infinite, non-trivial PrF for $N$ requires  a fraction $P>0$ of the  elements $y_m$ from $\avec{Y}_M \setminus  \avec{Y}_N$  to be forecastable at a particular precision. No matter how small the fraction $P$ is, the amount of such forecastable elements must be infinite. Otherwise, a finite amount of elements in an infinite series is negligible, given that no information about their location within the series is provided. If their location was known, it would be possible to truncate an infinite $\left( \avec{X}_M, \avec{Y}_M \right)$ pair into a finite one
\end{itemize}

\begin{remark}\label{rem:PFE_PrFE}
	PF is a particular case of PrF where $P=1$.
\end{remark}
The previous remark shows that PF is a subset of PrF, as every  pair $\left(\avec{X}_M, \avec{Y}_M\right)$ that is PF for $N$ is also PrF for $N$ with $P=1$. The conditions for PrF and PF (Definition \ref{def:probabilistic-ergodic} and \ref{def:precise-ergodic}, respectively) are similar, but while PrF only requires a fraction $P$ of the elements in  $\avec{Y}_M \setminus \avec{Y}_N$ to satisfy those conditions, PF requires all elements in $\avec{Y}_M \setminus \avec{Y}_N$ to satisfy those conditions.

\subsection{Forecast Ergodicity}
In this section we propose the concept \textit{forecast ergodicity} as a measure of the ability to forecast future events from data in the past. We model this bound by the algorithmic complexity of the available data.
\begin{definition}\label{def:ergodicity-ai} 
	The \textit{forecast ergodicity (FE)}, $\mathcal{E}\left(\avec{X}_M, \avec{Y}_M \right)$, of a pair   $\left(\avec{X}_M, \avec{Y}_M \right)$ is defined as a measure of the algorithmic information in the training data that is useful to forecast the future. The FE is bounded by:
	\begin{equation} \label{eq:FE}
		\mathcal{E}\left(\avec{X}_M, \avec{Y}_M \right) \le
		K\left( \avec{Y}_N \middle\vert \avec{X}_N \right).
	\end{equation}

\end{definition}

Note that the inequality in \eqref{eq:FE} requires either all of the structure provided by the training data is needed to forecast the future, or a shorter program could do. That is,  a part of the structure in the training data may not be necessary to forecast the future.

Let 
\begin{itemize}
	\item $\mathcal{E}_{\text{OF}}$ be the set of all OF pairs  $\left(\avec{X}_M, \avec{Y}_M \right)$ for $N$
	\item $\mathcal{E}_{\text{PF}}$ be the set of all PF pairs  $\left(\avec{X}_M, \avec{Y}_M \right)$  for $N$ and precision  $\epsilon$
	\item $\mathcal{E}_{\text{PrF}}$ be the set of all PrF pairs  $\left(\avec{X}_M, \avec{Y}_M \right)$ for $N$ and precision  $\epsilon$
\end{itemize}
\begin{lemma}
	For a given $N$ and precision  $\epsilon$
	\begin{equation}\label{eq:erg-sets}
		\mathcal{E}_{\text{OF}} \subset \mathcal{E}_{\text{PF}} \subset \mathcal{E}_{\text{PrF}}
	\end{equation}
\end{lemma}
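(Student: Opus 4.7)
The plan is to reduce both inclusions to the two remarks already established in the paper, namely Remark \ref{rem:OFE_PFE} (OF is the special case of PF in which the precision ball is a singleton) and Remark \ref{rem:PFE_PrFE} (PF is the special case of PrF in which the success fraction $P$ equals $1$). Each remark directly produces a containment; I would then exhibit a simple counter-example for each to upgrade containment to strict containment.

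For the first inclusion $\mathcal{E}_{\text{OF}} \subset \mathcal{E}_{\text{PF}}$, I would take an arbitrary pair $(\avec{X}_M, \avec{Y}_M) \in \mathcal{E}_{\text{OF}}$, fix the prescribed precision vector $\epsilon$, and pick $y_m^* = y_m$ for each $m \in \{N,\dots,M\}$. Under any reasonable metric $d$, $y_m \in B(y_m, \epsilon_m)$ because $d(y_m,y_m) = 0 < \epsilon_m$, so Definition \ref{def:ergodic-ai} directly implies the condition in Definition \ref{def:precise-ergodic}. Strictness is witnessed by a pair that requires nontrivial rounding: for instance, let $x_m = m$ and let $y_m$ be a noisy perturbation of $m$ bounded by $\epsilon_m$; $y_m$ itself cannot be produced with zero conditional complexity, but $y_m^* = m$ is in the ball and is trivially producible.

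For the second inclusion $\mathcal{E}_{\text{PF}} \subset \mathcal{E}_{\text{PrF}}$, I would take an arbitrary $(\avec{X}_M, \avec{Y}_M) \in \mathcal{E}_{\text{PF}}$ and observe that the defining condition of Definition \ref{def:precise-ergodic} holds for every $m \in \{N,\dots,M\}$, which is exactly the condition of Definition \ref{def:probabilistic-ergodic} with $P=1$. Strictness follows by producing a pair in which only a proper fraction of the future outputs admits a zero-complexity $\epsilon$-approximation; e.g.\ interleave a forecast-ergodic sub-sequence with random bits so that $P$ is strictly between $0$ and $1$.

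I do not anticipate any real obstacle: both inclusions are formal consequences of the two remarks, and the witnesses for strictness only need to separate the sets qualitatively. The only subtlety worth flagging in the writeup is ensuring the precision $\epsilon$ is treated consistently across the three definitions (OF is independent of $\epsilon$, while PF and PrF are parameterized by it), which is why I would state the lemma as holding for any fixed $N$ and $\epsilon$ and then invoke the singleton-ball reduction of Remark \ref{rem:OFE_PFE}.
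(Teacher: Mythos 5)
Your proof is correct and follows essentially the same route as the paper's: both inclusions are read off directly from Remarks \ref{rem:OFE_PFE} and \ref{rem:PFE_PrFE}. You additionally supply witnesses for strictness (which the paper's use of $\subset$ implies but its proof never addresses) and handle the fixed-$\epsilon$ issue more carefully by noting $d(y_m,y_m)=0<\epsilon_m$ rather than appealing to the singleton-ball special case; both refinements are sound, modulo the degenerate case $\epsilon_m=0$ where the open ball is empty.
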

\begin{proof}
	The relation follows directly from remarks \ref{rem:OFE_PFE} and \ref{rem:PFE_PrFE}: OF is a particular case of PF with $B(y_m, \epsilon_m) = \left\{ y_m  \right\}$ for all $m \in \{ N, \dots, M\}$, and PF  is a particular case of PrF with $P=1$. 
\end{proof}

\section{Probabilistic Forecast  Loci}
Consider the case  where the precision is as in \eqref{eq:pre_epsilon}. 
We can get a locus for the values of $P$ depending on $\epsilon$. An example, to be explained, is shown in Figure \ref{fig:elocus}.
\begin{definition}\label{def:N-locus}
	The \textit{probabilistic forecast  locus} (PrF-locus) $\mathcal{L}$  of a pair  $\left(\avec{X}_M, \avec{Y}_M\right)$ is defined for a given $N$ as the image of the function 
	\begin{equation}
		\mathcal{L}\left(\epsilon \right) = P_\epsilon
	\end{equation}
	over the domain of $\epsilon$,
	where $P_\epsilon$ is the  fraction of the predictions that are precise up to $\epsilon$.
\end{definition}

When the PrF-locus is unknowable, experiments can generate a best effort approximation. In the next section,  Example \ref{ex:PRNG-pure-erg} analyzes the forecast ergodicity of a pseudo random number generator from a theoretical point of view. For that example, the experimental
analysis did not achieve the theoretic optimum. For other problems, the theoretic
optimum might not be known.

An example of the approximation of an PrF-locus is shown in Figure \ref{fig:elocus}. It corresponds to  the results of a particular machine intelligence that learns the pseudo random number generator \cite{random_paper} of Example \ref{ex:PRNG-pure-erg}, yet to be described. 
\begin{figure}[!t]
	\centering
	\includegraphics[width=0.9\linewidth]{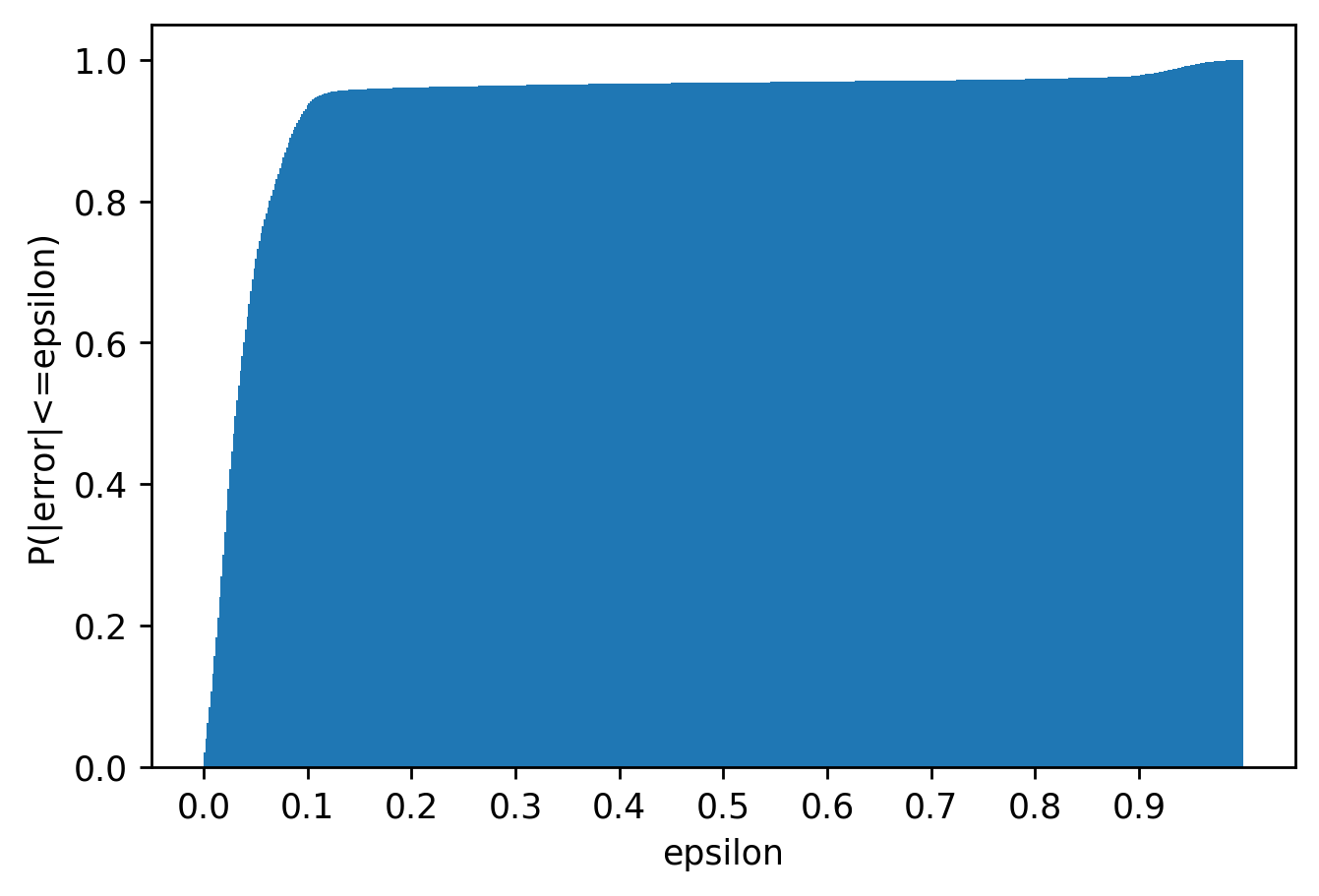}
	\caption{Experimental approximation of the PrF-locus $\mathcal{L}(\epsilon)$ of Example \ref{ex:PRNG-pure-erg}.}
	\label{fig:elocus}
\end{figure}
The plot shown in Figure~\ref{fig:elocus} corresponds to the cumulative distribution function of the probability of the error having magnitude $\epsilon$. In this example, the prediction is always a number in the interval $[0,1)$ and the magnitude of the error is never greater than 1. The predictions were generated by a deep neural network trained with $\num[group-separator={,}]{1000000}$ sequential pseudo random numbers. The graph was generated by measuring the error of the predictions of $\num[group-separator={,}]{100000}$ numbers equally distributed over the interval $[0,1)$ to approximate the PrFE-locus.

\section{Examples}\label{sec:examples}

\begin{example}\label{ex:PRNG-pure-erg}
	
	\textbf{Oracle forecastable (OF)}
	\textit{Pseudo Random Number Generator (PRNG)}\\

	Consider this PRNG \cite{random_paper}
	\begin{equation}\label{eq:PRNG-pure-erg}
		x_{n+1} = \text{frac}\left[\left(x_n + \pi\right)^5\right] .
	\end{equation}
	The PRNG works recursively after an initial seed $x_0$. The expression 
	\begin{equation}\label{eq:PRNG-pure-erg-p} 
		y = \text{frac}\left[\left(x + \pi\right)^5\right] 
	\end{equation}
	is equivalent and it can operate either recursively or in combination with the use of different seeds.
	If $\avec{X}_M$ is the vector of the successive inputs and $\avec{Y}_M$ is the output, then the pair $\left(\avec{X}_M, \avec{Y}_M \right)$ is oracle forecastable for some finite $N$. 
	The mathematical formulation of the PRNG
	\begin{equation}\label{eq:PRNG-pure-erg-f}
		f(x) = \text{frac}\left[\left(x + \pi\right)^5\right] 
	\end{equation}
	is helpful for the proof. Let $x_0$ be a seed for this PRNG and $x_1,\dots,x_{N}$ the outputs of $N$ successive recursive iterations. Then:
	\begin{equation}\label{eq:PRNG-inequality}
		K\left( \avec{Y}_{N} \middle\vert \avec{X}_{N} \right) \lec \ell\left(p\right) + \lg^*\ell(p)
	\end{equation}		
	where  $\avec{X}_{N} = \left( x_0,\dots,x_{N-1} \right)$, $\avec{Y}_{N} = \left( y_0,\dots,y_{N-1} \right) = \left( x_1,\dots,x_{N} \right)$, $\ell(p)$ is the length of a program that implements the PRNG and $\lg^*\ell(p)$ is the length of a self-delimiting code that expresses $\ell(p)$ (see Appendix \ref{apx:log_star}). This is trivially true for any value of $N$, but  equality is needed for the proof. It is not possible to know the value of $N$ that produces
	\begin{equation}\label{eq:PRNG-needed}
		K\left( \avec{Y}_{N} \middle\vert \avec{X}_{N} \right) \eqc K(f)
	\end{equation}
	where $f$ is the function in  \eqref{eq:PRNG-pure-erg-f}, but such $N$ exists and is finite.
	\begin{proof}
		Following the same logic as in Lemma \ref{lem:limitHit} and Corollary \ref{cor:limitHit},  given
		\begin{equation}\label{eq:PRNG-lim}
			\lim_{{M} \rightarrow \infty} K\left( \avec{Y}_{{M}} \middle\vert \avec{X}_{{M}} \right) \eqc K(f)
		\end{equation}
		there exists $N$ such that
		\begin{equation}
			K\left( \avec{Y}_{N}  \middle\vert \avec{X}_{N} \right) \eqc K(f).
		\end{equation}
		Thus
		\begin{equation}
			\begin{split}
				K\left( y_m \middle\vert \avec{X}_{N}, \avec{Y}_{N}, x_m \right) &\eqc K\left( y_m \middle\vert \avec{X}_{N}, f, x_m \right) \\
				&=	K\left( y_m \middle\vert  f, x_m \right) \\
				&\eqc 0
			\end{split}
		\end{equation}
		for all $ m>N$, satisfying the  condition for OFE in  \eqref{eq:ergodic-ai}. 
	\end{proof}
	
	The forecast ergodicity  of the PRNG is:
	\begin{equation}
		\begin{split}
			\mathcal{E}\left(\avec{X}_M, \avec{Y}_M \right) & \le K\left( \avec{Y}_{N} \middle\vert \avec{X}_{N} \right)\\
			& \eqc K(f)
		\end{split}
	\end{equation}
\end{example}

\begin{example}\label{ex:PRNG-precise-erg}
	\textbf{Precise forecastable (PF)}
	\textit{PRNG Web Service} \\
	Let the previous PRNG (Example \ref{ex:PRNG-pure-erg}) be running on a web server with a precision of $s$ bits. It provides numbers with a precision of at least $s'< s$ bits. At any moment, the precision of the output is $s''\in \{s',\dots,s\}$ bits. Let a sensor activated by radioactive decay  trigger a change of precision, so the exact precision at a given time is unknown.
	If $\avec{X}_M$ are the inputs to the PRNG on the server, with a precision of $s$ bits, and $\avec{Y}_M$ are the outputs on the clients, with a precision of at least $s'$ bits, then the pair $\left(\avec{X}_M, \avec{Y}_M \right)$ is PF for $\epsilon = 2^{s-s'}$. 
	
	Regarding the analysis in  \eqref{eq:ce}, the  KCS complexity of this noise can be bounded with $s-s'$ bits plus the length of a self-encoding expression for $s-s'$ (see Appendix \ref{apx:log_star}),
	$$ C_\epsilon \leq s-s' + \lg^*\left(s-s'\right) +c. $$
	up to a constant for the sign of $s-s'$.
\end{example}

\begin{example}\label{ex:PRNG-prob-erg}
	\textbf{Probabilistic forecastable  (PrF)}
	\textit{Signal Spectrum Neural Network Classifier}\\ 
	Consider a neural network (NN) binary classifier trained to distinguish between  LTE and Bluetooth signals. Bluetooth band (2400 to 2482.25 MHz) lies in between adjacent LTE bands, which are very close in the lower end and a little more separated on the upper end. Leakage between bands is very probable and other interferes, such as  harmonics interference caused by the multiples of LTE frequencies in the Bluetooth band, can mess up the signals and make the classifier fail. In this context, the pair  $\left(\avec{X}_M, \avec{Y}_M \right)$ is, presumably, PrF, where the inputs are the signals and the outputs are the classifications. Let assume that the training data cover many typical cases that is possible to classify, while $1\%$ of the future signals  are not represented by the training data. Then, the fraction $P$ of the future outputs that can be predicted to a certain precision could potentially be up to $99\%$. 
	Experimentally,  validation data\footnote{The validation data consist of a subset of the available data at the moment of training. It is reserved apart from the training data to evaluate the training process. It provides an independent set of data to check how the system generalizes outside of the training data.} 
	serves to estimate $P$. 
	Let the output  $y_m\in [0,1]$ be within the interval $[0,0.25]$ for Bluetooth and $[0.75,1]$ for LTE, so we impose a precision of $\epsilon = 0.25$. For an actual $P=0.99$, the experimental value will most probably be lower, for instance $P=0.95$, if $95\%$ of the validation data is correctly classified. The experimental value of $P$ is expected to be  smaller than the theoretical one, but if the validation data is biased, the experimental $P$ can appear better. In any case, the validation set provides the  best experimental approximation to the actual value.
\end{example}

\begin{example}\label{ex:prfeReinfLearn}
	\textbf{PrF}
	\textit{Reinforcement learning}\\ 
	A deep learning system that uses reinforcement learning updates the parameters, $\Theta$, of an artificial neural network model (agent)   over time. At each  iteration, $m$, the output of the agent is an action, $A_m$, and the inputs to the agent   are the state of the environment, $S_m$, and a reward, $R_m$.  The action can affect the state of the environment, which gives a reward to the agent. For this example, let
	\begin{align*}
		x_m &= \left(\Theta_{m-1}, S_m, R_m \right),\\
		y_m &= \left(A_m, \Theta_m\right).
	\end{align*}
	The goal is to generate and maintain  an agent whose actions  optimize the return (aggregation of rewards). The reward can be absent or delayed and it can have a stochastic ingredient, so, the pair $\left(\avec{X}_M, \avec{Y}_M \right)$ is PrF. 
\end{example}

Sutton and Barto \cite{Sutton1998} note that most modern reinforcement learning uses finite Markov decision processes. 

\begin{example}
	\textbf{OF}
	\textit{Finite Markov reinforcement learning}\\ 
	Adding some constrains to a reinforcement learning problem using a finite Markov decision process, we can obtain an OF problem. Consider the case where both the reward and the next state are a function of the action and the previous state, and the goal is to train an agent that chooses the action that optimizes the reward. This can apply to games such as checkers, chess, or go. These classes of games are OF for a finite $N$ \cite{checkers}.
\end{example}

The next example is an application of forecast ergodicity to dyadic numbers. Based on \cite{bil1995prob}, let $T:(0,1] \rightarrow(0,1]$ be a mapping from the unit interval into itself defined by
\begin{equation}
	T_\omega = 
	\begin{cases}
		2\omega & \text{if } 0<\omega\le \frac{1}{2} \\
		2\omega - 1 & \text{if } \frac{1}{2} < \omega \le 1.
	\end{cases}
\end{equation}
Define also a Boolean function $d_1$ on $(0,1]$ by
\begin{equation}
	d_1 = \begin{cases}
		0 & \text{if } 0<\omega\le \frac{1}{2} \\
		1 & \text{if } \frac{1}{2} < \omega \le 1,
	\end{cases}
\end{equation}
and let $d_i(\omega) = d_1(T^{i-1}\omega)$. Then
\begin{lemma}
	\begin{equation}
		\sum_{i=1}^{n}\frac{d_i(\omega)}{2^i} < \omega \le \sum_{i=1}^{n}\frac{d_i(\omega)}{2^i} + \frac{1}{2^n},
	\end{equation}
for all $\omega \in (0,1]$ and $n \le 1$.
\end{lemma}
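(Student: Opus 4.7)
The plan is to prove the inequality by induction on $n$, using the key identity
\[
\omega = \frac{d_1(\omega)}{2} + \frac{T\omega}{2}
\]
which reconstructs $\omega$ from its first binary digit and the shift $T\omega$. I would establish this identity first by splitting into the two cases of the definition of $T$: if $0 < \omega \le 1/2$, then $d_1(\omega) = 0$ and $T\omega = 2\omega$, so the right side is $0 + \omega = \omega$; if $1/2 < \omega \le 1$, then $d_1(\omega) = 1$ and $T\omega = 2\omega - 1$, so the right side is $1/2 + (2\omega - 1)/2 = \omega$. So the identity holds on all of $(0,1]$.

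Next I would observe that $T$ maps $(0,1]$ into $(0,1]$: for $0 < \omega \le 1/2$ we have $0 < 2\omega \le 1$, and for $1/2 < \omega \le 1$ we have $0 < 2\omega - 1 \le 1$. Hence $T^n\omega \in (0,1]$ for every $n \ge 1$ and every $\omega \in (0,1]$. For the base case $n = 1$, the identity above combined with $T\omega \in (0,1]$ gives
\[
\frac{d_1(\omega)}{2} < \frac{d_1(\omega)}{2} + \frac{T\omega}{2} = \omega \le \frac{d_1(\omega)}{2} + \frac{1}{2},
\]
which is exactly the claim for $n = 1$.

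For the inductive step, assume the result holds for $n-1$ with every $\omega' \in (0,1]$, and apply it to $\omega' = T\omega$, recalling that $d_i(T\omega) = d_1(T^{i-1}T\omega) = d_1(T^i\omega) = d_{i+1}(\omega)$. This gives
\[
\sum_{i=1}^{n-1} \frac{d_{i+1}(\omega)}{2^i} < T\omega \le \sum_{i=1}^{n-1} \frac{d_{i+1}(\omega)}{2^i} + \frac{1}{2^{n-1}}.
\]
Dividing by $2$, adding $d_1(\omega)/2$, and using the identity to rewrite the middle term as $\omega$ yields exactly the desired bounds for $n$ after reindexing the sum.

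The main obstacle is really just bookkeeping: making sure the strict inequality is preserved (which follows from $T\omega > 0$ on $(0,1]$) and that the right-hand inequality remains non-strict (which follows from $T\omega \le 1$, with equality allowed). Once the recurrence $\omega = d_1(\omega)/2 + T\omega/2$ and the invariance of $(0,1]$ under $T$ are in hand, the induction is essentially automatic. An equivalent, non-inductive phrasing would be to iterate the recurrence directly to obtain the closed form $\omega = \sum_{i=1}^n d_i(\omega)/2^i + T^n\omega/2^n$ and then read off the bounds from $T^n\omega \in (0,1]$; I would probably present the proof in this iterative form since it is more transparent than the formal induction.
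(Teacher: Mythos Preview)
Your proof is correct and follows essentially the same approach the paper indicates: induction on $n$ with the case split $\omega \le 1/2$ versus $\omega > 1/2$ (which you package into the identity $\omega = d_1(\omega)/2 + T\omega/2$), exactly as in the reference \cite{bil1995prob} the paper defers to. The closed-form variant $\omega = \sum_{i=1}^n d_i(\omega)/2^i + T^n\omega/2^n$ you mention is just the unrolled version of the same argument.
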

\begin{proof}
	Easily proven by induction considering separate cases for $\omega \le \frac{1}{2}$ and $\omega>\frac{1}{2}$ (see \cite{bil1995prob}).
\end{proof}

Every real number $\omega \in (0,1]$ can be arbitrarily approximated by dyadic numbers, which are always rational. In fact, for $\epsilon_n = 2^{-n}$, each ball $B(\omega, \epsilon_n)$ contains $\sum_{i+1}^{n}\frac{d_i(\omega)}{2^i}$, the dyadic approximation up to $n$ terms.

\begin{example}\label{ex:dyadicRational}
	\textbf{PF}
	\textit{Dyadic approximation of real numbers}\\
	Every real number $\omega \in (0,1]$ can be arbitrarily approximated by dyadic numbers. Let  each $x_n \in \left\{\avec{X}_\infty\right\}$  be the dyadic expansion of $\omega$ with $n$ digits of precision, and for every $y_n \in \left\{\avec{y}_M\right\}, y_n = \sum_{i+1}^{n}\frac{d_i(\omega)}{2^i}$. For instance, if $\omega = 0.01011$, then $x_1 = 0.0, x_2 = 0.01, x_3=0.010, x_4 = 0.0101, x_5 = 0.01011$, and $y_1=0, y_2 = 0.01, y_3=0.01, y_4=0.0101, y_5=0.01011$. 
	For all $m \in \mathbb{N}$ define a precision $\epsilon_m = 2^{-m}$ for the ball $	B(\omega, \epsilon_m) = \left\{  \omega ' \in \mathbb{R} : \left| \omega -  \omega' \right| < \epsilon_m \right\}$. Then, for all $\epsilon > 0$ there exists $N$ such that the output is precise forecastable. 
\end{example}

\section{Conclusion}

Machine intelligence can accumulate data from
the past to the present and do predictions within the boundaries of the algorithmic information forecastability of
the data. Using algorithmic information theory, oracle, precise and probabilistic forecastability have been defined and illustrated through examples. 

Forecastability is a property of the data and not the method used to perform the prediction. Not all data can be forecast, the most obvious of which is the repeated flipping of a fair coin.   Forecastability does not necessarily mean the forecasting is straightforward or computationally inexpensive, e.g. forecasting the output of certain pseudo-random number generators. Design of forecasting algorithms  remains a function of domain expertise about the data and computational power.

Marvin Minsky opined \cite{minski}  ``everybody should learn all about that [algorithmic information theory forecasting] and spend the rest of their lives working on it.'' This paper is a step in application of algorithmic information processing to Minsky's challenge.

%
%

{\appendix[Recursive Self-Delimiting Code for the Function $\text{lg}^*$]\label{apx:log_star}
	To run a program $p$ in an universal computer $\mathcal{U}$, an indication of the length of the program has to be provided so it can halt when the end of the program is reached. And, in order to indicate the length of the program, the computer $\mathcal{U}$ needs to be informed about when  to stop reading that number. Self-delimiting codes are used for this purpose.

For instance, 00 could encode the value 0 and 11 the value 1, while 01 would indicate the end of the number. The cost of this approach is twice the length of the number plus 2 bits for the ending: $2\log(n)+2$ bits.

A better approach, introduced in  \cite{cover} and \cite{vitanyi}, consist of concatenating the length, with the length of the length, with the length of the length of the length, and so on, recursively, by adding the concatenations on the left, until the value 1 is reached. The self-encoding of the number $n$ would be something like this:
\begin{equation*}
	1 \mathbin\Vert \dots
	\mathbin\Vert \log\left(\log\left(\log\left(n\right)\right)\right) 
	\mathbin\Vert \log\left(\log\left(n\right)\right)
	\mathbin\Vert \log\left(n\right)
	\mathbin\Vert n
	\mathbin\Vert 0
\end{equation*}
where $\mathbin\Vert$ is the symbol for the concatenation of two strings, and the 0 at the right indicates the end of the encoding.

In \cite{vitanyi}, when presenting this method, a previous knowledge of the depth of the  recursion is assumed. However, adding some simple rules allows for a fully self-delimiting implementation without indication of the depth of the recursion. For example, the implementation we developed, based on this approach, encodes the number 1200 as
\begin{verbatim}
	1111001011100101100000
\end{verbatim}
with a length of 22 bits. It decomposes as
\begin{verbatim}
	                     0 -> end
	          10010110000  -> 1200 in binary
	      1011             -> length of 1200
	   100                 -> length of 11
	 11                    -> length of 4
	1                      -> start
\end{verbatim}
which is written starting from the right (first 0, then the number, then the $\log$ of the number, then the $\log\log$ of the number, and so on), but it reads from bottom to top as in the next explanations
\begin{lstlisting}[breaklines]
1 -> start (will read 2 next bits when
     the code starts with a 1)
 11 -> 3 in binary. Will read next 3 bits
       (the binary length of 4 is 3):
       ceil(log2(4+1)) = 3
   100 -> 4 in binary. Will read next 4 
          bits(the binary length of 11
          is 4): ceil(log2(11+1)) = 4
      1011 -> 11 in binary. Will read
              next 11 bits (the binary 
              length of 1200 is 11): 
              ceil(log2(1200+1)) = 11
          10010110000 -> 1200 in binary
                     0 -> end
\end{lstlisting}
The encoding  process goes from right to left. 
This implementation allows the computer to read the number and detect the end, with the rule of stopping when the next lecture starts by 0.  With access to the decoding algorithm, no  previous knowledge about the number or about the depth of the recursion is needed in order to decode it. The implementation used for this example encodes and decodes integers greater than or equal to 0 (the sign is provided independently). Even when some details are obviated from this explanation ---details such as the special cases of the numbers 0 and 1 or the exceptions for the first iteration--- the example provides a clear view of how the method works.
For the decoding process, the reading starts from the left to the right as follows:
 after processing the first starting bit 1, then it reads $11_2 = 3$, so it now must proceed to read the next 3 bits, that is $100_2 = 4$, then it reads the next 4 bits, that is $1011_2 = 11$, it reads the next 11 bits, that is $10010110000_2 = 1200$, then, the next bit is 0, indicating the end of the encoding. So, the last number in the concatenation, 1200, is the self-delimited encoded number.

This self-delimiting encoding of the number 1200 has length 22; that is only a little  shorter than the bit duplication  method for this number, which requires $2\times11+2=24$ bits. However, for larger numbers, the optimization becomes more clear, as shown in Figure \ref{fig:graphlgstar}.
\begin{figure}[!t]
	\centering
	\includegraphics[width=0.9\linewidth]{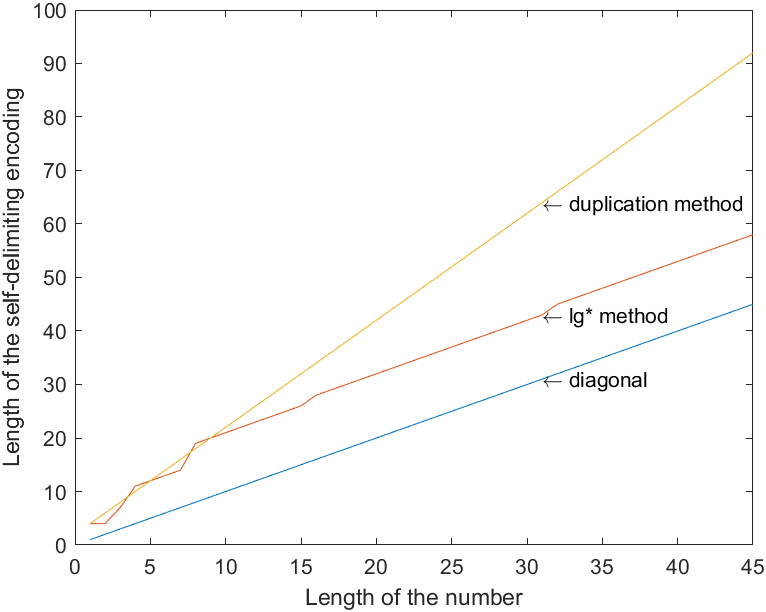}
	\caption{Length in bits of the self-delimiting encoding method of bit duplication and $\lg^*$ as a function of the length of the binary numbers. The difference between the bit duplication and the $\lg^*$ method is significant when the numbers are  large. }
	\label{fig:graphlgstar}
\end{figure}
Taking as reference the length of the number (diagonal line on Figure \ref{fig:graphlgstar}) the additional amount of bits required for bit duplication grows as $O(n)$, while with the $\lg^*$ method requires less bits for the representation.

}

\bibliographystyle{IEEEtran}
\bibliography{biblio}

\vfill

\end{document}